\newtheorem{theorem}{Theorem}[section]
\newtheorem{lemma}{Lemma}[section]
\newtheorem{remark}{Remark}[section]
\newtheorem{open question}{Open question}[section]
\newcommand{\ba}{\begin{array}}
\newcommand{\ea}{\end{array}}
\newcommand{\bea}{\begin{eqnarray}}
\newcommand{\eea}{\end{eqnarray}}
\newcommand{\eproof}{$\quad \Box$}
\newcommand{\reff}[1]{(\ref{#1})}
\begin{document}


\title{Improved Balas and Mazzola Linearization for Quadratic
$0$-$1$ Programs with Application in a New Cutting Plane Algorithm}
 \author{{Wajeb GHARIBI} \\
{\it\small Dept. of Computer Science,
 College of Computer Science $\&$ Information Systems, }\\
{\it\small   Jazan University, Jazan 82822-6694, KSA}\\
  {\it\small E-mail:  gharibi@jazanu.edu.sa }\\
   }

\date{}

\maketitle

\begin{abstract}
Balas and Mazzola linearization (BML) is widely used in devising
cutting plane algorithms for quadratic $0$-$1$ programs. In this
article, we improve BML by first strengthening the primal
formulation of BML and then considering the dual formulation.
Additionally, a new cutting plane algorithm is proposed.
\end{abstract}

{\bf Keywords:}\quad Quadratic Program, Integer program,\
Linearization,\ Cutting plane algorithm.


\section{Introduction}

In this article, we consider the generalized quadratic $0$-$1$
program given as follows
\begin{equation}\label{prob}\textnormal{(P)}\quad
\begin{array}{cc}
\min& x^TBx+c^Tx\\
s.t.&x\in X\subseteq \{0,1\}^n,\end{array}
\end{equation}
where $B=(b_{ij})$ is an $n\times n$ nonnegative matrix. Without
loss of generality we assume $b_{ii}=0$ since
$b_{ii}x_i^2=b_{ii}x_i$. Problem (P) is a generalization of
unconstrained zero-one quadratic problems, zero-one quadratic
knapsack problems, quadratic assignment problems and so on. It is a
classical NP-hard problem \cite{Garey}.

Linearization strategies are to reformulate the zero-one quadratic
programs as equivalent mixed-integer programming problems with
additional binary variables and/or continuous variables and
continuous constraints, see \cite{Chaovalitwongse,Elloumi,Fortet,
Glover1,Glover2,Glover3,Liberti}. Recently, Sherali and Smith
\cite{Sherali} developed small linearizations for more generalized
quadratic $0$-$1$ programs. Gueyea and Michelon \cite{Gueyea}
proposed a framework for unconstrained quadratic $0$-$1$ programs.
These linearizations are standard for employing exact algorithms
such as branch and bound.  Balas and Mazzola proposed a small-size
linearization \cite{Balas} and then successfully applied it to
devise exact or heuristic cutting plane algorithms.

In this article, we focus on  new  small-size tight linearizations.
We first propose a primal version of Balas and Mazzola linearization
(BML). By strengthening the linearization and then considering the
dual model, we obtain a new linearization which improves BML. As a
direct application, a new cutting plane algorithm is proposed.

This article is organized as follows. In section 2, we discuss Balas
and Mazzola linearization (BML) \cite{Balas} and the related primal
linearization. In section 3, we create a new approach to obtain a
tighter linearization. It improves the primal linearization of BML
in the sense that the linear programming relaxation often give
tighter lower bound. In section 4, we apply this dual linearization
to devise cutting plane algorithm and compare the efficacy with that
of BML. Concluding remarks are made in section 5.

\section{The Primal Model of Balas and Mazzola Linearization}
\setcounter{equation}{0}

In this section, we show that Balas and Mazzola Linearization has a
primal model.

Define a column vector $u$ with components
\begin{eqnarray}
 u_i = \max \{\sum_{j\neq i}b_{ij}x_j:\ x\in
 \overline{X}\},~i=1,\ldots,n,
\label{u:def}
\end{eqnarray}
where $\overline{X}$ is any suitable relaxation of $X$ such that the
problems \reff{u:def} can be solved relatively easily.

We rewrite the objective function of (P) as
$$
\sum_{i=1}^n ( x_i\sum_{j\neq i}b_{ij}x_j+ c_ix_i).
$$
Introducing $n$ continuous variables
\begin{equation}
y_i:=x_i\sum_{j\neq i}b_{ij}x_j,
\end{equation}
we can obtain the following mixed $0$-$1$ linear program
\begin{eqnarray}
(PL_1)\quad \begin{array}{cl} \min & \sum_{i=1}^n y_i+c_ix_i\\
s.t. & y_i\geq \sum_{j\neq i}b_{ij}x_j+ u_ix_i-u_i,\ i=1,...,n,\\
& y_i\geq 0,\ i=1,...,n,\\
& x\in X,\end{array}
\end{eqnarray}
where the two series inequality constraints follow from the fact
$(x_i-1)(\sum_{j\neq i}b_{ij}x_j-u_i)\geq 0$ and $x_i\geq 0$,
$b_{ij}\geq 0$, respectively.

\begin{theorem}\label{thm:1}
Problems ($P$) and ($PL_1$) are equivalent in the sense that for
each optimal solution to one problem, there exists an optimal
solution to the other problem having the same optimal objective
value.
\end{theorem}
The proof is found in Appendix \ref{s1}.

\begin{remark}
If we restrict (P) as the quadratic assignment problem, the proposed
linearization ($PL_1$) reduces to Kaufman-Broeckx linearization
\cite{Kaufman}.
\end{remark}

Below we apply Benders' decomposition approach to Problem (P), as in
\cite{Burkard}. Firstly, (P) can be decomposed in the following way
\begin{eqnarray}\label{decom}
\min_{x\in X}\{ \{ \min_{y\in Y(x)}\ \sum_{i=1}^n
y_{i}\}+\sum_{i=1}^n c_ix_i\},
\end{eqnarray}
where
\begin{eqnarray}
Y(x):=\{y\in \Re^{n}|\ y_{i}\geq \sum_{j\neq
i}b_{ij}x_{j}+u_{i}x_{i}-u_{i},\ y_{i}\geq 0,\ i=1,2,...n\} .
\end{eqnarray}
For fixed $x$, we dualize the first series constraints of the
problem $\min_{y\in Y(x)}\ \sum_{i=1}^ny_{i}$ using Lagrangian
multipliers $\lambda_{i}$ ($i=1,2,...n$). We obtain the subproblem
\begin{eqnarray}\label{spx}
SP(x):\ \begin{array}{cl}\max
&\sum_{i=1}^n(\sum_{j\neq i}b_{ij}x_{j}-u_{i}+u_{i}x_{i})\lambda_{i}\\
s.t.& 0\leq \lambda_{i}\leq 1 , \  i,j=1,2,...n.
\end{array}
\end{eqnarray}
Note that the feasible solution region $F$ of SP(x) does not depend
on the chosen vector $x$. Let $\lambda^{t}$ be the incidence vectors
of the extreme points of $F$ (which is unit hypercube in $\Re^{n}$).
Introducing
\begin{eqnarray}
&&\alpha^{(t)}_{i}:=\sum_{j\neq
i}\lambda^{(t)}_{j}b_{ji}+\lambda^{(t)}_{i}u_{i},\label{coeff1}\\
&&\beta^{(t)}:=\sum_{i=1}^n\lambda^{(t)}_{i}u_{i}\ ,\qquad
t=1,2,..., 2^{n}:=T ,\label{coeff2}
\end{eqnarray}
we can see that Problem (\ref{decom}) is equivalent to
\begin{eqnarray}
\min_{x\in X}\ \max_{1\leq t\leq T}\
\{\sum_{i=1}^n\alpha^{(t)}_{i}x_{i}-\beta^{(t)}\}+\sum_{i=1}^nc_{i}x_{i}
,\label{obj1}
\end{eqnarray}
by the fact that for any fixed $x$, the second-stage problem
$\min_{y\in Y(x)}\ \sum_{i=1}^ny_{i}$ of (\ref{decom}) is a linear
programming whose dual formulation is just (\ref{spx}) and the fact
that one of the optimal solutions to the linear programming problem
(\ref{spx}) is attained at an extreme point of $F$. Problem
(\ref{obj1}) yields now the following mixed $0$-$1$ linear program
\begin{eqnarray}\label{obj2}
(DL_1):\ \begin{array}{cl} \min & z+\sum_{i=1}^nc_{i}x_{i} \\
s.t.& z\geq\sum_{i=1}^n\alpha^{(t)}_{i}x_{i}-\beta^{(t)} ,\ 1\leq
t\leq T ,\\ &x\in X.\end{array}
\end{eqnarray}
In some sense, linearization ($DL_1$) can be regarded as the dual
formulation of (PL1). Above we also obtained the equivalence between
($PL_1$) and ($DL_1$):
\begin{theorem}\label{thm:2}
Problems ($PL_1$) and ($DL_1$) are equivalent in the sense that for
each optimal solution to one problem, there exists an optimal
solution to the other problem having the same optimal objective
value.
\end{theorem}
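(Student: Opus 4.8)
The plan is to verify the chain of reductions already sketched above the theorem, checking that each passage preserves the optimal value and that optimizers transfer in both directions; the argument runs through the intermediate problems \reff{decom} and \reff{obj1}. First I would observe that $(PL_1)$ is literally the decomposition \reff{decom}. The objective of $(PL_1)$ is $\sum_i y_i+\sum_i c_i x_i$, and the constraints couple $y$ to $x$ only through membership $y\in Y(x)$, so the joint minimization over $(x,y)$ splits as
$$
\min_{x\in X}\Big\{\min_{y\in Y(x)}\sum_{i=1}^n y_i+\sum_{i=1}^n c_i x_i\Big\},
$$
which is exactly \reff{decom} since $\sum_i c_i x_i$ is constant in $y$. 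Hence $(PL_1)$ and \reff{decom} share their optimal value: a minimizer $(x^\ast,y^\ast)$ of $(PL_1)$ gives the outer minimizer $x^\ast$ together with the inner minimizer $y^\ast\in Y(x^\ast)$, and conversely.

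Next I would establish the equivalence of \reff{decom} and \reff{obj1} by applying linear-programming duality to the inner problem for each fixed $x\in X$. The program $\min_{y\in Y(x)}\sum_i y_i$ is feasible (any sufficiently large $y$ lies in $Y(x)$) and bounded below by $0$, so it attains a finite optimum; its dual is precisely $SP(x)$ of \reff{spx} (dualizing the first series of constraints with multipliers $\lambda_i$, where the condition $\lambda_i\le 1$ is exactly what keeps the Lagrangian bounded in $y_i\ge 0$), and strong duality gives $\min_{y\in Y(x)}\sum_i y_i=\max SP(x)$. Because the feasible set $F=[0,1]^n$ of $SP(x)$ is a polytope whose extreme points are the $2^n$ vertices $\lambda^{(t)}$, the linear objective attains its maximum at one of them, so $\max SP(x)=\max_{1\le t\le T}\sum_i(\sum_{j\neq i}b_{ij}x_j-u_i+u_i x_i)\lambda_i^{(t)}$. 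A routine rearrangement—swapping the summation indices $i$ and $j$ in the bilinear term and collecting the coefficient of each $x_i$—rewrites this as $\max_t\{\sum_i\alpha_i^{(t)}x_i-\beta^{(t)}\}$ with $\alpha^{(t)},\beta^{(t)}$ as in \reff{coeff1}--\reff{coeff2}. Substituting into \reff{decom} produces \reff{obj1}.

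Finally I would pass from \reff{obj1} to $(DL_1)$ by the standard epigraph reformulation: introducing a scalar $z$ with $z\ge\sum_i\alpha_i^{(t)}x_i-\beta^{(t)}$ for every $t$ replaces the inner maximum, and minimizing the objective drives $z$ down to exactly $\max_t\{\sum_i\alpha_i^{(t)}x_i-\beta^{(t)}\}$ at optimum. This preserves the optimal value and gives a transparent correspondence: an optimal $x^\ast$ for \reff{obj1} together with $z^\ast:=\max_t\{\sum_i\alpha_i^{(t)}x_i^\ast-\beta^{(t)}\}$ is optimal for $(DL_1)$, while an optimal $(x^\ast,z^\ast)$ for $(DL_1)$ necessarily has $z^\ast$ equal to that maximum and hence yields an optimal $x^\ast$ for \reff{obj1}. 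Chaining the three equivalences, and tracing $x^\ast$ back through the inner minimizer $y^\ast$ of the first step, yields the claimed correspondence of optimal solutions with equal objective values.

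I expect the only genuine content to lie in the duality step: one must confirm that the inner LP is feasible and bounded for every $x\in X$ so that strong duality applies with no gap, and that the maximum of $SP(x)$ is realized at a vertex of the hypercube $F$. The remaining passages—the index swap producing \reff{coeff1}--\reff{coeff2} and the epigraph reformulation—are bookkeeping, and the two-way optimal-solution correspondence then follows immediately by reading each of the three reductions in both directions.
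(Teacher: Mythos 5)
Your proposal is correct and follows essentially the same route as the paper, whose ``proof'' of this theorem is precisely the Benders'-decomposition derivation preceding it: splitting $(PL_1)$ into the two-stage problem \reff{decom}, dualizing the inner LP to get $SP(x)$ in \reff{spx}, invoking vertex optimality over the hypercube $F$ to reach \reff{obj1}, and finishing with the epigraph reformulation \reff{obj2}. You add the details the paper leaves implicit (feasibility and boundedness of the inner LP for strong duality, the index swap yielding \reff{coeff1}--\reff{coeff2}, and the two-way transfer of optimizers), which is exactly the right bookkeeping.
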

Combining Theorem \ref{thm:1} with Theorem \ref{thm:2}, we can see
($DL_1$) is equivalent to (P). In literature, linearization ($DL_1$)
is known as Balas and Mazzola linearization (BML) \cite{Balas}.

\section{New Tight Primal and Dual Linearizations}
\setcounter{equation}{0}

In this section, we propose a new approach to establish new tight
linearizations.

Define
\begin{eqnarray}
&&v_i = \max \{\sum_{j\neq i}b_{ij}x_j:\ x\in \overline{X},\
x_i=0\},~i=1,\ldots,n,
\label{v:def}\\
&&l_i = \min \{\sum_{j\neq i}b_{ij}x_j:\ x\in \overline{X},\
x_i=1\},~i=1,\ldots,n. \label{l:def}
\end{eqnarray}
Let $v$ and $l$ be the vectors with components $v_i$ and $l_i$
respectively, $i=1,\ldots,n$.
\begin{lemma}[\cite{Xia0}]\label{lem:1}
Let $x\in X\subseteq \{0,1\}^n$. For all $i=1,\ldots,n$,
\begin{eqnarray}
&&x_i\sum_{j\neq i}b_{ij}x_j = \max \{\sum_{j\neq i}b_{ij}x_j +v_i
x_i-v_i,\ l_i x_i \}. \label{rela:1:1}
\end{eqnarray}
\end{lemma}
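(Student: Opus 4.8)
The plan is to exploit the binary nature of $x_i$ and argue by a direct case analysis on its two possible values, $x_i=0$ and $x_i=1$. Write $S_i:=\sum_{j\neq i}b_{ij}x_j$ for the common linear form appearing on both sides. Then the asserted identity \reff{rela:1:1} reads $x_i S_i=\max\{S_i+v_ix_i-v_i,\ l_ix_i\}$, and in each of the two cases both the left-hand side and the two candidate expressions inside the max collapse to something explicit, so that the identity becomes a single comparison of $S_i$ against $v_i$ or $l_i$. The one structural fact I would invoke throughout is that $\overline{X}$ is a relaxation of $X$, i.e. $X\subseteq\overline{X}$, so the given point $x\in X$ is automatically an admissible competitor in the constrained optimization problems \reff{v:def} and \reff{l:def}.

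For the case $x_i=0$, the left-hand side is $x_iS_i=0$, while the right-hand side becomes $\max\{S_i-v_i,\ 0\}$. Here I would observe that, because $x\in X\subseteq\overline{X}$ and $x_i=0$, the point $x$ is feasible for the maximization defining $v_i$; hence $S_i\le v_i$, i.e.\ $S_i-v_i\le 0$. Therefore $\max\{S_i-v_i,0\}=0$, matching the left-hand side.

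For the case $x_i=1$, the left-hand side is $x_iS_i=S_i$, while the right-hand side becomes $\max\{S_i,\ l_i\}$. Symmetrically, since $x\in X\subseteq\overline{X}$ and $x_i=1$, the point $x$ is feasible for the minimization defining $l_i$, so $S_i\ge l_i$, giving $\max\{S_i,l_i\}=S_i$, again matching the left-hand side. Combining the two cases establishes \reff{rela:1:1} for every $i$.

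The computation itself is routine; the only point that requires genuine care — and the step I would flag as the crux — is the feasibility argument that licenses the inequalities $S_i\le v_i$ (when $x_i=0$) and $S_i\ge l_i$ (when $x_i=1$). This is precisely why the side conditions $x_i=0$ and $x_i=1$ are built into the definitions \reff{v:def} and \reff{l:def}: splitting on the value of $x_i$ ensures that the very vector $x$ under consideration lies in the restricted feasible set of the relevant extremal problem, so that $v_i$ and $l_i$ bound $S_i$ in the correct direction. I would therefore make the containment $X\subseteq\overline{X}$ explicit at the outset, since the whole argument rests on it.
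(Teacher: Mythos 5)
Your proof is correct, and there is in fact nothing in the paper to compare it against: the lemma is stated without proof and attributed to the external reference \cite{Xia0} (a separate, submitted manuscript), so your argument fills a gap rather than duplicating one. The two-case analysis on $x_i\in\{0,1\}$ is the natural self-contained route, and you correctly isolate the one non-trivial ingredient: the containment $X\subseteq\overline{X}$, which makes the point $x$ itself feasible in the optimization problem defining $v_i$ (when $x_i=0$) or $l_i$ (when $x_i=1$), yielding exactly the inequality $S_i\le v_i$ or $S_i\ge l_i$ needed to collapse the max. The only hypothesis you might state explicitly is that the restricted feasible sets in \reff{v:def} and \reff{l:def} are nonempty, so that $v_i$ and $l_i$ are finite and the expressions $v_ix_i-v_i$ and $l_ix_i$ are well defined for the index $i$ under consideration; but note that in each branch of your argument the relevant set contains $x$ itself, so the quantity actually used in that branch is automatically well defined, and the paper implicitly assumes both are finite throughout. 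No gap.
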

Therefore, the new linearization reads
\begin{eqnarray}
(PL_2)\quad \begin{array}{cl} \min & \sum_{i=1}^n y_i+c_ix_i\\
s.t. & y_i\geq \sum_{j\neq i}b_{ij}x_j+ v_ix_i-v_i,\ i=1,...,n,\\
& y_i\geq l_ix_i,\ i=1,...,n.\\
& x\in X,\end{array}
\end{eqnarray}
Under the linear transformations $t_i:=y_i-l_ix_i,~i=1,\ldots,n$,
the above linearization becomes
\begin{eqnarray}
(PL_2')\quad \begin{array}{cl} \min & \sum_{i=1}^n t_i+(l_i+c_i)x_i\\
s.t. & t_i\geq \sum_{j\neq i}b_{ij}x_j+ (v_i-l_i)x_i-v_i,\ i=1,...,n,\\
& t_i\geq 0,\ i=1,...,n,\\
& x\in X.\end{array}
\end{eqnarray}
As a corollary of Lemma \ref{lem:1}, we have
\begin{theorem}\label{thm:3}
Problems ($PL_2$) (or ($PL_2'$)) and (P) are equivalent in the sense
that for each optimal solution to one problem, there exists an
optimal solution to the other problem having the same optimal
objective value.
\end{theorem}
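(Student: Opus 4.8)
The plan is to exploit the separable structure of $(PL_2)$ and reduce the claim to a direct application of Lemma \ref{lem:1}. The central observation is that in $(PL_2)$ the continuous variables $y_i$ enter the objective only through $\sum_i y_i$, with positive coefficient, while each $y_i$ is subject to exactly the two lower bounds $y_i\geq \sum_{j\neq i}b_{ij}x_j+v_ix_i-v_i$ and $y_i\geq l_ix_i$. Hence, for any fixed feasible $x\in X$, the inner minimization over $y$ is solved by pushing each $y_i$ down to its largest lower bound,
\[
y_i^\ast=\max\Big\{\sum_{j\neq i}b_{ij}x_j+v_ix_i-v_i,\ l_ix_i\Big\},
\]
and the resulting optimal inner value is $\sum_{i=1}^n y_i^\ast+c_ix_i$.

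First I would invoke Lemma \ref{lem:1}, which asserts precisely that for $x\in X\subseteq\{0,1\}^n$ the right-hand side above equals $x_i\sum_{j\neq i}b_{ij}x_j$ for every $i$. Substituting, the optimal inner value becomes $\sum_{i=1}^n x_i\sum_{j\neq i}b_{ij}x_j+c_ix_i$, which is exactly the objective $x^TBx+c^Tx$ of $(P)$ after the rewriting already recorded in the text. Consequently
\[
\min_{(x,y)\ \text{feasible for}\ (PL_2)}\Big(\sum_{i=1}^n y_i+c_ix_i\Big)=\min_{x\in X}\big(x^TBx+c^Tx\big),
\]
so the two optimal values coincide.

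To complete the equivalence in the stated two-sided form I would argue both directions explicitly. Given an optimal $x^\ast$ for $(P)$, setting $y_i^\ast:=x_i^\ast\sum_{j\neq i}b_{ij}x_j^\ast$ yields, by Lemma \ref{lem:1}, a feasible point of $(PL_2)$ whose objective equals the optimal value of $(P)$, hence an optimal solution of $(PL_2)$. Conversely, given an optimal $(x^\ast,y^\ast)$ for $(PL_2)$, optimality forces each $y_i^\ast$ to equal its maximal lower bound (otherwise decreasing it strictly lowers the objective), so $x^\ast$ attains the common optimal value in $(P)$ and is therefore optimal there. Finally, the equivalence of $(PL_2)$ and $(PL_2')$ is routine: for each fixed $x$ the map $t_i:=y_i-l_ix_i$ is an affine bijection between the $y$ and $t$ variables that carries the feasible region and the objective of $(PL_2)$ onto those of $(PL_2')$, as one checks by direct substitution.

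Since the statement is labelled a corollary of Lemma \ref{lem:1}, I expect no genuine obstacle: the analytic content is entirely contained in the lemma. The only point requiring care is the justification that an optimal $y$ indeed saturates its lower bounds, which is immediate from the sign of the objective coefficients but must be stated to make the reduction to Lemma \ref{lem:1} rigorous.
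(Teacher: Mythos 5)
Your proof is correct and follows exactly the route the paper intends: the paper states Theorem \ref{thm:3} as a corollary of Lemma \ref{lem:1} without spelling out details, and your argument (optimal $y_i$ saturates the larger of its two lower bounds, which Lemma \ref{lem:1} identifies with $x_i\sum_{j\neq i}b_{ij}x_j$, plus the affine substitution $t_i=y_i-l_ix_i$ for the $(PL_2')$ variant) is precisely that intended argument, mirroring the paper's Appendix~\ref{s1} proof of Theorem \ref{thm:1}. No gaps; the saturation step you flag as needing care is indeed the only point of substance beyond the lemma, and you handle it correctly.
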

Continuously relaxing linearizations ($PL_1$) and ($PL_2$), i.e.,
replacing $X$ with $\overline{X}$, we obtain linear programming
lower bounds for (P), denoted by $v(R\textnormal{-}PL1)$ and
$v(R\textnormal{-}PL_2)$ respectively. ($PL_2$) is not weaker than
($PL_1$) in the following sense.
\begin{theorem}\label{thm:4}
$v(R\textnormal{-}PL_1)\leq v(R\textnormal{-}PL_2)$.
\end{theorem}
\begin{proof}
It is sufficient to show any feasible solution of ($PL_2$) is also
feasible in ($PL_1$), which follows from the fact that $v_i\leq
u_i$, $x_i\in [0,1]$ and $l_i\geq 0$ since $B=(b_{ij})$ is
nonnegative.
\end{proof}

Next, we consider the dual model of ($PL_2$). As the formulation of
($PL_2'$) is similar to that of ($PL_1$), we immediately have the
dual model based on ($DL_2$).
\begin{eqnarray}\label{obj3}
(DL2):\ \begin{array}{cl} \min & z+\sum_{i=1}^n(l_i+c_{i})x_{i} \\
s.t.&
z\geq\sum_{i=1}^n\overline{\alpha}^{(t)}_{i}x_{i}-\overline{\beta}^{(t)}
,\ 1\leq t\leq T ,\\ &x\in X,\end{array}
\end{eqnarray}
where
\begin{eqnarray}
&&\overline{\alpha}^{(t)}_{i}:=\sum_{j\neq
i}\lambda^{(t)}_{j}b_{ji}+\lambda^{(t)}_{i}(v_{i}-l_i),\\
&&\overline{\beta}^{(t)}:=\sum_{i=1}^n\lambda^{(t)}_{i}v_{i}.
\end{eqnarray}
Similarly to Theorem \ref{thm:2}, we have
\begin{theorem}\label{thm:5}
Problems ($PL_2$) (or ($PL_2'$) ) and ($DL_2$) are equivalent in the
sense that for each optimal solution to one problem, there exists an
optimal solution to the other problem having the same optimal
objective value.
\end{theorem}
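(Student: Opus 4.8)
The plan is to replay the Benders/LP-duality derivation that produced Theorem \ref{thm:2} from $(PL_1)$, applying it verbatim to $(PL_2')$, whose primal structure is identical to that of $(PL_1)$. First I would note that it suffices to establish the equivalence of $(PL_2')$ with $(DL_2)$: the substitution $t_i := y_i - l_i x_i$ is an invertible affine change of variables carrying feasible points of $(PL_2)$ bijectively to feasible points of $(PL_2')$ with equal objective value, so $(PL_2)$ and $(PL_2')$ share optima and corresponding optimal solutions, and the equivalence with $(PL_2)$ follows at once from that with $(PL_2')$.

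Next I would decompose $(PL_2')$ as $\min_{x \in X}\{\min\{\sum_i t_i : t \in Y_2(x)\} + \sum_i (l_i + c_i)x_i\}$, where $Y_2(x)$ is the polyhedron in $t$ cut out by the two constraint blocks. For fixed $x$ the inner problem is a linear program that is feasible (take each $t_i$ large) and bounded below by $0$, so strong LP duality applies. Dualizing the first block with multipliers $\lambda_i$ while keeping $t_i \geq 0$ yields, exactly as in the passage producing $SP(x)$, the subproblem of maximizing $\sum_i(\sum_{j\neq i} b_{ij}x_j + (v_i - l_i)x_i - v_i)\lambda_i$ over the unit hypercube $0 \leq \lambda_i \leq 1$. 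Its feasible region is again independent of $x$, and the optimum of this bounded LP is attained at a vertex, i.e.\ at one of the $2^n = T$ incidence vectors $\lambda^{(t)}$.

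The one computation I would actually carry out is to expand this objective at a vertex $\lambda^{(t)}$ and collect the coefficient of each $x_i$: interchanging the summation indices in $\sum_i \lambda_i \sum_{j\neq i} b_{ij} x_j$ rewrites it as $\sum_i x_i \sum_{j\neq i}\lambda_j b_{ji}$, which together with the $\lambda_i(v_i - l_i)x_i$ term gives precisely $\overline{\alpha}^{(t)}_i$, while the $x$-free part collects to $-\sum_i \lambda_i v_i = -\overline{\beta}^{(t)}$. Substituting $\min\{\sum_i t_i : t\in Y_2(x)\} = \max_{1 \leq t \leq T}\{\sum_i \overline{\alpha}^{(t)}_i x_i - \overline{\beta}^{(t)}\}$ into the outer minimization produces a $\min$-$\max$ that linearizes, through an epigraph variable $z$, into exactly $(DL_2)$; tracing the value through these equalities shows the optimal objective values coincide and optimal solutions correspond.

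The argument is routine once the parallel with Theorem \ref{thm:2} is in place; the only step demanding care is the index bookkeeping in the vertex expansion, where one must confirm that the off-diagonal contribution to $\overline{\alpha}^{(t)}_i$ is $b_{ji}$ rather than $b_{ij}$ and that the diagonal multiplier is $v_i - l_i$ rather than $v_i$, so that the recovered cut coefficients match the stated definitions of $\overline{\alpha}^{(t)}_i$ and $\overline{\beta}^{(t)}$ verbatim.
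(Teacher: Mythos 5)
Your proposal is correct and follows exactly the route the paper intends: the paper proves Theorem \ref{thm:5} simply by remarking that $(PL_2')$ has the same structure as $(PL_1)$ and invoking the Benders decomposition / LP-duality argument used for Theorem \ref{thm:2}, which is precisely what you replay, including the correct reduction of $(PL_2)$ to $(PL_2')$ via $t_i := y_i - l_i x_i$ and the vertex expansion recovering $\overline{\alpha}^{(t)}_i$ and $\overline{\beta}^{(t)}$ verbatim.
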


\section{Cutting Plane Algorithms Based on Dual Linearizations}
\setcounter{equation}{0}

We first establish cutting plane algorithm based on ($DL_1$). As in
any decomposition approach the master problem ($DL_1$) is not solved
for all restrictions
$z\geq\sum_{i=1}^n\alpha^{(t)}_{i}x_{i}-\beta^{(t)},\ (1\leq t\leq
T)$, but only for a subset $\{t|\ 1\leq t\leq r\}$ of indices. We
denote the restricted master problem by $({DL_1}^r)$.

Getting a solution $\widetilde{x}$ for the restricted master
problem, the subproblem $SP(\widetilde{x})$ is solved, which yields
\begin{eqnarray}
\lambda^{(r+1)}_{i}:=\widetilde{x}_{i} ,\  i,j=1,2,...n.
\end{eqnarray}
$\lambda^{(r+1)}$ is an optimal solution of the subproblem
$SP(\widetilde{x})$ because of the definition of the constants
$u_{i}$ and the constraints $0\leq \lambda_{i}\leq1$.

A new cut
\begin{eqnarray}\label{eq:0}
z\geq\sum_{i=1}^n\alpha^{(r+1)}_{i}x_{i}-\beta^{(r+1)}
\end{eqnarray}
is added to the current $({DL_1}^r)$. Thus we get $({DL_1}^{r+1})$.

The objection function value $\overline{z}$ of $SP(x)$ is an upper
bound for (P), whereas the objective function value $\underline{z}$
of the master problem $({DL_1}^r)$ is a lower bound. If
$\overline{z}=\underline{z}$, stop and return an optimal solution.

This is the flow of cutting plane algorithm. Below we show the
finite convergence. The proof is found in Appendix \ref{s2}.

\begin{lemma}\label{lem:2}
Assume that $\widetilde{x}$ is an optimal solution of $({DL_1}^r)$
and $\lambda^{(r+1)}_{i}:=\widetilde{x}_{i}$. For any $s>r$,
$\widetilde{x}$ cannot be an optimal solution of $({DL_1}^s)$ unless
it is the optimal solution of (P).
\end{lemma}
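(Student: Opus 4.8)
The plan is to exploit the special structure of the self-generated cut, namely that the multiplier vector $\lambda^{(r+1)}$ is chosen to coincide with the incumbent $\widetilde{x}$, together with the fact that an optimal solution of $({DL_1}^r)$ satisfies $\widetilde{x}\in X\subseteq\{0,1\}^n$ and is therefore binary. The decisive computation is to evaluate the new cut \reff{eq:0} at the point $x=\widetilde{x}$. Substituting $\lambda^{(r+1)}_i=\widetilde{x}_i$ into the definitions \reff{coeff1} and \reff{coeff2} of $\alpha^{(r+1)}_i$ and $\beta^{(r+1)}$, and using $\widetilde{x}_i^2=\widetilde{x}_i$ to cancel the terms carrying the constants $u_i$, I expect to arrive at the identity
\[
\sum_{i=1}^n\alpha^{(r+1)}_i\widetilde{x}_i-\beta^{(r+1)}=\widetilde{x}^TB\widetilde{x}.
\]
In words, the cut appended at iteration $r+1$ reproduces exactly the quadratic objective of (P) at the incumbent. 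Verifying this identity, and in particular the cancellation of the $u_i$ terms, which relies crucially on binariness, is the heart of the argument; everything afterwards is a short squeezing estimate.

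Next I would draw the consequence for an arbitrary $s>r$. Since $s>r$, the constraint system of $({DL_1}^s)$ contains the cut \reff{eq:0}, so at $x=\widetilde{x}$ every feasible $z$ satisfies $z\geq\widetilde{x}^TB\widetilde{x}$. Hence the objective of $({DL_1}^s)$ evaluated at $\widetilde{x}$ is bounded below by $\widetilde{x}^TB\widetilde{x}+\sum_{i=1}^nc_i\widetilde{x}_i$, which is precisely the objective value of (P) at the feasible point $\widetilde{x}$, and therefore is itself at least the optimal value of (P).

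Finally I would combine this with the monotonicity of the restricted-master bounds. Because each $({DL_1}^s)$ keeps only a subset of the cuts that together define $({DL_1}^T)\equiv(\textnormal{P})$, its optimal value never exceeds the optimal value of (P), while adding cuts only increases it. Now suppose $\widetilde{x}$ were optimal for $({DL_1}^s)$. Then the optimal value of $({DL_1}^s)$ equals the objective attained at $\widetilde{x}$, which by the previous paragraph is at least the optimal value of (P); together with the reverse inequality this forces equality throughout, so $\widetilde{x}^TB\widetilde{x}+\sum_{i=1}^nc_i\widetilde{x}_i$ equals the optimal value of (P). Consequently the feasible point $\widetilde{x}$ attains the minimum of (P) and is an optimal solution of (P), which is the claim. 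I do not anticipate any genuine obstacle beyond the cut-evaluation identity: once that is established, the conclusion follows from the standard Benders lower-bound/upper-bound sandwich.
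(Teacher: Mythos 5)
Your proposal is correct and follows essentially the same route as the paper's own proof: evaluate the self-generated cut at $\widetilde{x}$, use $\widetilde{x}_i^2=\widetilde{x}_i$ to cancel the $u_i$ terms and obtain $\sum_{i=1}^n\alpha^{(r+1)}_i\widetilde{x}_i-\beta^{(r+1)}=\widetilde{x}^TB\widetilde{x}$, then squeeze this feasible objective value of (P) against the relaxation lower bound provided by the restricted master. If anything, your write-up is slightly more careful than the paper's, since you track the $\sum_i c_i\widetilde{x}_i$ term explicitly rather than leaving it implicit in the notation $\widetilde{z}_s$.
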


From the above lemma, we have the convergence result proved in
Appendix \ref{s3}.
\begin{theorem}
The above cutting plane algorithm stops in a finite number of steps
and returns the optimal solution of (P).
\end{theorem}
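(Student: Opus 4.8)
The plan is to derive finite termination from Lemma \ref{lem:2} together with the finiteness of $X$, and to derive correctness from the interplay of the lower and upper bounds maintained by the algorithm.

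First I would record the two bounds that frame every iteration. Since $({DL_1}^r)$ is obtained from the full master $(DL_1)$ by discarding constraints, it is a relaxation, so its optimal value $\underline{z}$ satisfies $\underline{z}\leq v(DL_1)=v(P)$, where the last equality is Theorems \ref{thm:1} and \ref{thm:2}; moreover adding cuts only tightens the relaxation, so the sequence of values $\underline{z}$ is nondecreasing. On the other hand each solution $\widetilde{x}$ returned by $({DL_1}^r)$ lies in $X$ and is therefore feasible for (P), and the resulting upper bound $\overline{z}$ equals the objective value of (P) at $\widetilde{x}$, by LP duality applied to $SP(\widetilde{x})$ and Theorem \ref{thm:1}; hence $\overline{z}\geq v(P)$. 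Consequently $\underline{z}\leq v(P)\leq\overline{z}$ holds throughout, so the stopping test $\overline{z}=\underline{z}$ can only fire at a genuine optimum, which settles correctness.

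Next I would prove that the algorithm cannot run forever. Let $\widetilde{x}^{(1)},\widetilde{x}^{(2)},\ldots$ be the optimal solutions returned at successive iterations, and I claim they are pairwise distinct as long as the algorithm has not stopped. Suppose instead $\widetilde{x}^{(r)}=\widetilde{x}^{(s)}$ with $r<s$. At iteration $r$ the cut indexed by $\lambda^{(r+1)}=\widetilde{x}^{(r)}$ was appended, so this cut is present in $({DL_1}^s)$, and $\widetilde{x}^{(r)}$ is by assumption optimal for $({DL_1}^s)$. Lemma \ref{lem:2} then forces $\widetilde{x}^{(r)}$ to be the optimal solution of (P). The key computation here is that the cut generated from $\lambda=\widetilde{x}$, namely $z\geq\sum_{i=1}^n\alpha^{(r+1)}_ix_i-\beta^{(r+1)}$, evaluates at $x=\widetilde{x}$ to exactly $\widetilde{x}^TB\widetilde{x}$, using $\widetilde{x}_i^2=\widetilde{x}_i$ to cancel the $u_i$ terms; combined with the upper bound this forces $\underline{z}=\overline{z}$ at iteration $s$, so the algorithm would already have stopped, a contradiction. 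Hence no repetition occurs before termination.

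Finally I would close the argument by counting. The generated cuts are indexed by the incidence vectors $\lambda^{(t)}=\widetilde{x}^{(t)}\in\{0,1\}^n$, i.e. by extreme points of the unit hypercube $F$, of which there are only $T=2^n$; equivalently, distinct $\widetilde{x}$'s are distinct points of the finite set $X$. Since the previous step shows no $\widetilde{x}$ recurs while the algorithm is running, at most $|X|\leq 2^n$ iterations can occur, so the algorithm halts after finitely many steps and, by the correctness already established, returns an optimal solution of (P). The main obstacle is the middle step: making precise that the ``unless it is the optimal solution of (P)'' escape clause in Lemma \ref{lem:2} coincides with the stopping test $\overline{z}=\underline{z}$, which rests on the identity that the cut built from $\widetilde{x}$ reproduces the true quadratic value $\widetilde{x}^TB\widetilde{x}$ at $x=\widetilde{x}$.
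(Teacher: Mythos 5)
Your proof is correct and follows essentially the same route the paper intends: Lemma \ref{lem:2} (no optimal solution of a restricted master can recur unless it solves (P)) combined with the finiteness of $\{0,1\}^n$ gives termination, and the sandwich $\underline{z}\leq v(P)\leq\overline{z}$ gives correctness of the stopping test. In fact you spell out more than the paper does --- the paper merely asserts the theorem ``from the above lemma'' (its appendix pointer actually lands on the proof of Theorem \ref{thm:6}), and your explicit verification that a repeated $\widetilde{x}$ forces $\underline{z}=\overline{z}$, via the identity that the cut generated from $\widetilde{x}$ evaluates to $\widetilde{x}^TB\widetilde{x}$ at $\widetilde{x}$, is exactly the computation in the paper's proof of Lemma \ref{lem:2}.
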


Cutting plane algorithm based on ($DL_2$) can be similarly devised.
To compare with ($DL_1$) conveniently, instead of ($DL_2$), we use
the following equivalent model
\begin{eqnarray}\label{obj4}
(DL_2'):\ \begin{array}{cl} \min & z+\sum_{i=1}^nc_{i}x_{i} \\
s.t.&
z\geq\sum_{i=1}^n(\overline{\alpha}^{(t)}_{i}+l_i)x_{i}-\overline{\beta}^{(t)}
,\ 1\leq t\leq T ,\\ &x\in X.\end{array}
\end{eqnarray}
\begin{theorem}\label{thm:6}
Assume $l_{i}> 0$ for all $i=1,2,...,n$. The restricted master
program $({{DL_2}^r}')$ gives a lower bound strictly better than
that of $({DL_1}^r)$ until the cutting plane algorithm stops.
\end{theorem}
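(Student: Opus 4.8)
The plan is to reduce the comparison to the two piecewise-linear underestimators generated by the common cut vectors and then read off their difference explicitly. Fix the cuts $\lambda^{(1)},\dots,\lambda^{(r)}$ (the $0$-$1$ incidence vectors produced by the algorithm) and write, for the $t$-th cut,
\[
g_1^{(t)}(x)=\sum_{i=1}^n\alpha_i^{(t)}x_i-\beta^{(t)},\qquad
g_2^{(t)}(x)=\sum_{i=1}^n(\overline{\alpha}_i^{(t)}+l_i)x_i-\overline{\beta}^{(t)}.
\]
Since the feasible set $X$ is common to both masters and the continuous variable is eliminated via $z=\max_{t}g_\bullet^{(t)}(x)$, each restricted master has the form $\min_{x\in X}\big[\max_{1\le t\le r}g_\bullet^{(t)}(x)+c^Tx\big]$. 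Hence it suffices to compare the two inner maxima on $X\subseteq\{0,1\}^n$.

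The core step is an elementary identity, valid for every common cut $\lambda=\lambda^{(t)}$ and every $x\in\{0,1\}^n$, obtained by subtracting the coefficient vectors and the constants and grouping according to the value of $\lambda_i$:
\[
g_2^{(t)}(x)-g_1^{(t)}(x)=\sum_{i:\lambda_i=0}l_i\,x_i+\sum_{i:\lambda_i=1}(u_i-v_i)(1-x_i).
\]
Because $v_i\le u_i$ (the maximization defining $v_i$ runs over the subset $\{x_i=0\}$ of that defining $u_i$), because $l_i\ge 0$, and because $x_i\in[0,1]$, every summand is nonnegative. Thus $g_2^{(t)}\ge g_1^{(t)}$ pointwise, so $\max_t g_2^{(t)}\ge\max_t g_1^{(t)}$ on $X$, and therefore $v(({{DL_2}^r}'))\ge v(({DL_1}^r))$. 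This gives the non-strict comparison and, notably, uses nothing beyond $B\ge 0$ and $l\ge 0$.

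It remains to upgrade this to a strict inequality as long as the algorithm has not stopped, and here the hypothesis $l_i>0$ enters. I would argue by contradiction: suppose $v(({{DL_2}^r}'))=v(({DL_1}^r))$ and let $\overline{x}$ attain the former. Chasing equality through
\[
v(({DL_1}^r))\ \le\ \max_t g_1^{(t)}(\overline{x})+c^T\overline{x}\ \le\ \max_t g_2^{(t)}(\overline{x})+c^T\overline{x}\ =\ v(({{DL_2}^r}'))
\]
forces $\overline{x}$ to be optimal for $({DL_1}^r)$ as well and forces $\max_t g_1^{(t)}(\overline{x})=\max_t g_2^{(t)}(\overline{x})$. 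Choosing $t_1$ to attain the $g_1$-maximum then yields $g_2^{(t_1)}(\overline{x})=g_1^{(t_1)}(\overline{x})$, i.e. the difference formula vanishes at $\overline{x}$. Since $l_i>0$, vanishing of the first sum forces $\overline{x}_i=0$ for every $i$ with $\lambda_i^{(t_1)}=0$, so $\mathrm{supp}(\overline{x})\subseteq\mathrm{supp}(\lambda^{(t_1)})$, while vanishing of the second sum forces $u_i=v_i$ on the remaining inactive coordinates. The aim is to show that such a coincidence can occur only when the stopping test is already met.

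The main obstacle is precisely this last implication: converting the algebraic degeneracy ($\mathrm{supp}(\overline{x})\subseteq\mathrm{supp}(\lambda^{(t_1)})$ together with $u_i=v_i$ off that support) into the statement that $\overline{x}$ already certifies optimality of (P). I expect to need three ingredients: (i) the exactness of each generated cut at its defining point, $g_\bullet^{(t)}(\lambda^{(t)})=(\lambda^{(t)})^TB\lambda^{(t)}$, established exactly as in the computation justifying $\lambda^{(r+1)}:=\widetilde{x}$; (ii) the fact that the full cut families are valid underestimators recovering the quadratic, $\max_t g_\bullet^{(t)}(x)=x^TBx$ on $X$, which caps each restricted maximum below $x^TBx$; and (iii) Lemma \ref{lem:2}, which forbids a master optimum from persisting once its own cut is present unless it solves (P). Combining (i)--(iii) with the degeneracy should pin $\max_{t\le r}g_1^{(t)}(\overline{x})$ to $\overline{x}^TB\overline{x}$, making the lower bound tight at $\overline{x}$ and triggering the stop, which contradicts the hypothesis that the algorithm has not yet stopped. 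Everything else (the difference identity, the inequality $v_i\le u_i$, and the non-strict comparison) is routine algebra resting only on the nonnegativity of $B$ and $l$; the genuine work is isolating this termination-coincidence equivalence.
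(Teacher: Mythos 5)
Your setup and the non-strict half are correct: the difference identity
\[
g_2^{(t)}(x)-g_1^{(t)}(x)=\sum_{i:\ \lambda_i^{(t)}=0}l_i x_i+\sum_{i:\ \lambda_i^{(t)}=1}(u_i-v_i)(1-x_i)\ \ge\ 0
\]
is exactly the algebra behind the paper's chain of inequalities \reff{equl}, and it does yield $v(({{DL_2}^r}'))\ge v(({DL_1}^r))$. But the theorem's entire content is the \emph{strict} inequality before termination, and there your proposal stops short of a proof: the implication you yourself call ``the main obstacle'' --- that $\mathrm{supp}(\overline{x})\subseteq\mathrm{supp}(\lambda^{(t_1)})$ together with $u_i=v_i$ on the remaining coordinates forces the stopping test to be met --- is left as a plan (``I expect to need'', ``should pin''), not an argument. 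Moreover, your ingredients (i)--(iii) do not close it. Cut exactness at the generating point $\lambda^{(t_1)}$ does not transfer to a point of strictly smaller support: under $\mathrm{supp}(\overline{x})\subseteq\mathrm{supp}(\lambda^{(t_1)})$ one computes
\[
g_1^{(t_1)}(\overline{x})-\overline{x}^TB\overline{x}
=\sum_{j:\ \lambda_j^{(t_1)}=1,\ \overline{x}_j=0}\Bigl(\sum_{i\neq j}b_{ji}\overline{x}_i-u_j\Bigr)\ \le\ 0,
\]
and your degeneracy condition $u_j=v_j$ does not make the right-hand side vanish, since $u_j$ is a maximum over $\overline{X}$ rather than the value at $\overline{x}$. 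So the restricted lower bound need not be tight at $\overline{x}$, no termination follows, and Lemma \ref{lem:2} only forbids the exact recurrence $\overline{x}=\lambda^{(t)}$, not support containment. The contradiction therefore never materializes.

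For comparison, the paper's proof is direct rather than by contradiction: it asserts that non-termination forces the current $({{DL_2}^r}')$-optimum $\widetilde{x}$ to differ from \emph{every} earlier cut vector $\lambda^{(t)}$ in the specific direction that some index $i_t$ has $\widetilde{x}_{i_t}=1$ and $\lambda_{i_t}^{(t)}=0$; plugging this into the same difference identity gives $g_2^{(t)}(\widetilde{x})-g_1^{(t)}(\widetilde{x})\ge l_{i_t}>0$ for every $t\le r$, and since $\widetilde{x}$ is feasible for $({DL_1}^r)$ the optimal values separate strictly. Note that this directional non-recurrence claim is precisely the statement whose absence blocks your argument: if you had it, your contradiction would close immediately, because support containment $\mathrm{supp}(\overline{x})\subseteq\mathrm{supp}(\lambda^{(t_1)})$ denies it. The paper supports the claim only by the recurrence argument of Lemma \ref{lem:2}, which rules out $\widetilde{x}=\lambda^{(t)}$ but says nothing about the case $\mathrm{supp}(\widetilde{x})\subsetneq\mathrm{supp}(\lambda^{(t)})$ that you correctly isolated. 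So you identified the genuine crux of the theorem, but your proposal does not supply it; as written it proves only the non-strict bound.
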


\section{Conclusions}
In this article, we focus on the generalized quadratic $0$-$1$
program, denoted by (P). We propose a linearization ($PL_1$) for (P)
and show that it can be regarded as a dual formulation of Balas and
Mazzola linearization (BML), denoted ($DL_1$). By applying a new
approach, we establish a tight linearization ($PL_2$) of the same
size. We proved ($PL_2$) is not weaker than ($PL_1$) in the sense
that the continuous linear programming relaxation of ($PL_2$) gives
tighter lower bound than that of ($PL_1$).  The dual linearizations
of ($PL_1$) and ($PL_2$), ($DL_1$) and ($DL_2$) are successfully
used in devising cutting plane algorithms, respectively. We showed
that the cutting plane algorithm based on ($DL_2$) is strictly
better than that of ($DL_1$) under some weak assumptions.

\appendix

\section{$Proof~ of~ Theorem~ \ref{thm:1}:~$} \label{s1}
Let $x$ be any feasible solution to (P). It is easy to verify that
$(x,y)$ is feasible in ($PL_1$) with the same objective value, where
$y_i=x_i\sum_{j\neq i}b_{ij}x_j$. As a consequence, the optimal
objective value of ($PL_1$) gives a lower bound for (P). It is
sufficient to show that if $(x^*,y^*)$ is an optimal solution to
($PL_1$), $x^*$ is optimal in (P) with the same objective value. We
notice that $y_i^*=\max\{\sum_{j\neq i}b_{ij}x_j^*+ u_ix_i^*-u_i,0
\}$. If $x_i^*=1$, we have $y_i^*=\sum_{j\neq i}b_{ij}x_j^*$,
otherwise, $x_i^*=0$, $y_i^*=0$. As a conclusion,
$y_i^*=x_i^*(\sum_{j\neq i}b_{ij}x_j^*)$ which implies
$\sum_{i=1}^ny_i^*=\sum_{i=1}^nx_i^*(\sum_{j\neq i}b_{ij}x_j^*)$.
That is, $x^*$ is a feasible solution to (P) whose objective value
equals a lower bound. Therefore, $x^*$ is optimal in (P) and both
the optimal objective values are equal. \eproof

\section{$Proof~ of~ Lemma~ \ref{lem:2}:~$}\label{s2}
Denote the optimal objective function value of any master problem
$({DL_1}^s)$ by $\widetilde{z}_s$, which is a lower bound for (P).
If $\widetilde{x}$ is also an optimal solution of $({DL_1}^s)$ for
some $s>r$, it follows that
\begin{eqnarray}
\widetilde{z}_s\geq\sum_{i=1}^n\alpha^{(r+1)}_{i}\widetilde{x}_{i}-\beta^{(r+1)},\label{eq:1}
\end{eqnarray}
since $({DL_1}^s)$ contains the constraint (\ref{eq:0}). The
left-hand side of (\ref{eq:1}) is a lower bound for (P) while the
right-hand side of (\ref{eq:1}) corresponds to a feasible objective
function value of (P), which can be shown as follows:
\begin{eqnarray*}
&&\sum_{i=1}^n\alpha^{(r+1)}_{i}\widetilde{x}_{i}-\beta^{(r+1)}\\
&&=\sum_{i=1}^n(\sum_{j\neq
i}\lambda^{(r+1)}_{j}b_{ji}+\lambda^{(r+1)}_{i}u_{i})\widetilde{x}_{i}
-\sum_{i=1}^n\lambda^{(r+1)}_{i}u_{i}\\
&&=\sum_{i=1}^n(\sum_{j\neq
i}\widetilde{x}_{j}b_{ji}+\widetilde{x}_{i}u_{i})\widetilde{x}_{i}
-\sum_{i=1}^n\widetilde{x}_{i}u_{i}\\
&&=\sum_{i=1}^n(\sum_{j\neq
i}b_{ji}\widetilde{x}_{j})\widetilde{x}_{i}.
\end{eqnarray*}
Therefore (\ref{eq:1}) holds as equality and $\widetilde{x}$ must be
the optimal solution of (P). \eproof

\section{$Proof~ of~ Theorem~ \ref{thm:6}:~$}\label{s3}
If the cutting plane algorithm based on ($DL_2'$) has not stopped at
step $r$, the optimal solution $\widetilde{x}$ must be different
from $\lambda^{(t)}$ for any $1\leq t\leq r$, i.e., there exist
index $i_t$ such that
$(1-\lambda_{i_t}^{(t)})\widetilde{x}_{i_t}>0$. Then the right-hand
side of $({{DL_2}^r}')$ satisfies
\begin{eqnarray}
&&\sum_{i=1}^n(\overline{\alpha}^{(t)}_{i}+l_i)x_{i}-\overline{\beta}^{(t)}\nonumber\\
&&=\sum_{i=1}^n(\sum_{j\neq
i}\lambda^{(t)}_{j}b_{ji}+\lambda^{(t)}_{i}(v_{i}-l_i)+l_i)x_{i}-
\sum_{i=1}^n\lambda^{(t)}_{i}v_{i}\nonumber\\
&&\geq \sum_{i=1}^n(\sum_{j\neq
i}\lambda^{(t)}_{j}b_{ji}+\lambda^{(t)}_{i}u_{i})x_{i}-
\sum_{i=1}^n\lambda^{(t)}_{i}u_{i}+\sum_{i=1}^nl_i(1-\lambda_i^{(t)})x_i,\nonumber\\
&&> \sum_{i=1}^n(\sum_{j\neq
i}\lambda^{(t)}_{j}b_{ji}+\lambda^{(t)}_{i}u_{i})x_{i}-
\sum_{i=1}^n\lambda^{(t)}_{i}u_{i},\label{equl}
\end{eqnarray}
for any $1\leq t\leq r$. Therefore the objective function value of
$({{DL_2}^r}')$ is strictly larger than that of $({DL_1}^r)$.
\eproof



\begin{thebibliography}{99}
\bibitem{Al-Khayyal}A. Al-Khayyal, J.E. Falk: Jointly constrained biconvex
programming. \textit{Math.of Oper.Res. }\textbf{8(2)}: 273 (1983)
DOI: 10.1287/moor.8.2.273

\bibitem{Balas}E. Balas, J.B. Mazzola: Quadratic 0-1 programming by a
new linearization. Presented at the TIMS/ORSA meeting, Washington,
DC (1980)

\bibitem{Billionnet}A. Billionnet $\acute{E}$. Soutif: Using a Mixed Integer Programming Tool for Solving the 0-1
Quadratic Knapsack Problem. \textit{INFORMS Journal on Computing }
\textbf{16}: 188-197 (2004) DOI: 10.1287/ijoc.1030.0029

\bibitem{Burkard}R.E. Burkard, U. Derigs: Assignment and Matching
Problems: Solution Methods with FORTRAN-Programs. Springer-Verlag,
1980.

\bibitem{Chaovalitwongse} W. Chaovalitwongse, P.M. Pardalos, O.A. Prokopyev:
A new linearization technique for multi-quadratic $0$-$1$
programming problems. \textit{Operations Research Letters},
\textbf{32}: 517-522 (2004) DOI: 10.1016/j.orl.2004.03.005


\bibitem{Elloumi}
S. Elloumi, A. Faye, E. Soutif: Decomposition and Linearization for
0-1 Quadratic Programming. \textit{Annals of Operations Research},
\textbf{99}: 79-93 (2000) DOI: 10.1023/A:1019236832495


\bibitem{Fortet} R. Fortet: L'algebre de boole et ses applications en recherche operationnelle.
\textit{Cahiers du Centred'Etudes de Recheche Operationnelle},
\textbf{1}: 5-36 (1959) DOI: 10.1007/BF03006558


\bibitem{Garey}
M.R. Garey, D.S. Johnson: Computers and Intractability: A Guide to
the Theory of NP-Completeness. WH Freeman $\&$ Co, New York, NY,
USA, 1979

\bibitem{Glover1} F. Glover, Imporved linear integer programming formulations of nonlinear integer problems.
\textit{Manage. Sci.}, \textbf{22}(4): 455-460 (1975)
DOI:10.1287/mnsc.22.4.455

\bibitem{Glover2} F. Glover, E. Woolsey: Further reduction of zero-one polynomial programming problems
to zero-one linear programming problems. \textit{Oper. Res.},
\textbf{21}(1): 156-161 (1973) DOI: 10.1287/opre.21.1.156

\bibitem{Glover3} F. Glover, E. Woolsey: Converting the $0$-$1$ polynomial programming problem to a
$0$-$1$ linear program. \textit{Oper. Res.}, \textbf{22}(1): 180-182
(1974) DOI: 10.1287/opre.22.1.180

\bibitem{Gueyea}
S. Gueyea, P. Michelon: A linearization framework for unconstrained
quadratic (0-1) problems, \textit{Discrete Applied Mathematics},
\textbf{157}(6), 1255-1266 (2009) DOI: 10.1016/j.dam.2008.01.028



\bibitem{Kaufman}L. Kaufman, F. Broeckx: An algorithm for the quadratic assignment
problem using Bender's decomposition. \textit{European Journal of
Operational Research } \textbf{2}: 204-211 (1978) DOI:
10.1016/0377-2217(78)90095-4

\bibitem{Liberti}L. Liberti: Compact linearization of binary quadratic
problems, \textit{4OR},  \textbf{5}(3), 231-245, (2007) DOI:
10.1007/s10288-006-0015-3

\bibitem{McCormick}P. McCormick: Computability of global solution to
factorable nonconvex problems: Part I$-$Convex underestimating
problems. \textit{Math.Programming}\textbf{10}: 147-175 (1976) DOI:
10.1007/BF01580665


\bibitem{Sherali} H.D. Sherali, J.C. Smith: An improved linearization strategy for
zero-one quadratic programming problems. \textit{Optimization
Letters}, \textbf{1}: 33-47 (2007) DOI: 10.1007/s11590-006-0019-0

\bibitem{Xia0} W. Gharibi and Y. Xia: A tight linearization strategy for zero-one quadratic programming
problems. Submitted.

\bibitem{Xia} Y. Xia and Y. Yuan: A new linearization method for
quadratic assignment problems. \textit{Optimization Methods and
Software}, \textbf{21(5)}: 803-816 (2006)
DOI:10.1080/10556780500273077


\end{thebibliography}
\end{document}